\newtheorem{theorem}{Theorem}[section]
\newtheorem{corollary}[theorem]{Corollary}
\newtheorem{proposition}[theorem]{Proposition}
\newtheorem{remark}[theorem]{Remark}
\numberwithin{equation}{section} 
\def\@seccntformat#1{\@ifundefined{#1@cntformat}%
	{\csname the#1\endcsname\quad}
	{\csname #1@cntformat\endcsname}
}
\newif\ifShowComments
\def\strutdepth{\dp\strutbox}
\def\druk#1{\strut\vadjust{\kern-\strutdepth
        {\vtop to \strutdepth{%
                \baselineskip\strutdepth\vss
                        \llap{\hbox{#1}\quad}\null}}}}
\title{\bf 
%
%
On the bias of the Gini estimator: Poisson and geometric cases, a characterization of the gamma family, and unbiasedness under gamma distributions
}
\author[1]{Roberto Vila
	\thanks{rovig161@gmail.com. \, (Corresponding author)}
	}
\author[1,2]{Helton Saulo\thanks{heltonsaulo@gmail.com}}
\affil[1]{Department of Statistics, University of
	 Bras\'ilia, Bras\'ilia, Brazil}
\affil[2]{
Department of Economics, Federal University of Pelotas, Pelotas, Brazil}
\begin{document}
\maketitle

\begin{abstract}
In this paper, we derive a general representation for the expectation of the Gini coefficient estimator in terms of the Laplace transform of the underlying distribution, together with the mean and the Gini coefficient of its exponentially tilted version. This representation leads to a new characterization of the gamma family within the class of nonnegative scale families, based on a stability property under exponential tilting. As direct applications, we show that the Gini estimator is biased for both Poisson and geometric populations and provide an alternative, unified proof of its unbiasedness for gamma populations. By using the derived bias expressions, we propose plug-in bias-corrected estimators and assess their finite-sample performance through a Monte Carlo study, which demonstrates substantial improvements over the original estimator. Compared with existing approaches, our framework highlights the fundamental role of scale invariance and exponential tilting, rather than distribution-specific algebraic calculations, and complements recent results in \citet{Baydil2025} and \citet{VilaSaulomixture2025,VilaSaulomth2025}.
\end{abstract}
\smallskip
\noindent
{\small {\bfseries Keywords.} {Gamma distribution, Gini coefficient, Gini estimator, unbiased estimator.}}
\\
{\small{\bfseries Mathematics Subject Classification (2010).} {MSC 60E05 $\cdot$ MSC 62Exx $\cdot$ MSC 62Fxx.}}


\section{Introduction}

The Gini coefficient is a widely used measure of inequality in economics, finance, demography, and reliability theory, and has been extensively studied since its introduction by \citet{Gini1936}, particularly with regard to the statistical properties of its estimators.

A well-known issue is the finite-sample bias of the classical Gini estimator. To address it, \citet{Deltas2003} proposed an upward-adjusted estimator that corrects small-sample bias under particular distributional assumptions. Its unbiasedness has since been established for several distribution families, most notably the gamma family; see \citet{Baydil2025} and \citet{VilaSaulomixture2025,VilaSaulomth2025}. These results raise a natural question: what structural properties of the underlying distribution explain the unbiasedness of the Gini estimator, and to what extent does this phenomenon extend beyond the gamma family?

This paper addresses these questions by developing a general representation for the expectation of the Gini estimator that applies to any non-degenerate nonnegative distribution with finite mean. The proposed representation expresses the expectation of the estimator in terms of the Laplace transform of the underlying distribution and the Gini coefficient of its exponentially tilted version. This formulation naturally connects with the Gini mean difference framework studied in \citet{VILA2024110032} and provides a transparent way to analyze bias properties across different distributional families.

Within this framework, we obtain four main contributions. First, we show explicitly that the Gini estimator is biased for Poisson populations, deriving an exact integral representation for its expectation and sharp bounds for the resulting bias. Second, we extend this analysis to geometric populations, where we again establish bias and provide closed-form and hypergeometric representations for the expectation of the estimator. These results demonstrate that bias persists even for simple, classical discrete distributions. Third, we show that a scale-invariance property of exponentially tilted distributions characterizes the gamma family, yielding a unified and conceptually simple proof of the unbiasedness of the Gini estimator under gamma populations. Fourth, we use the derived bias expressions of the Gini estimators (Poisson and geometric) to propose bias-corrected versions through a plug-in approach. A Monte Carlo study is carried out to assess the finite-sample behavior of both the original and the corrected estimators. The simulation results indicates that the bias-corrected estimators exhibit a markedly improved performance.

The remainder of the paper is organized as follows. In Section \ref{sec:02}, we review basic properties and representations of the Gini coefficient. In Section \ref{Deriving estimator biases}, we derive a general expression for the expectation of the Gini estimator.
In Sections \ref{Poisson-populations} and \ref{Geometric-populations}, we apply this result to Poisson and geometric populations, respectively, establishing finite-sample bias in both cases.
In Section \ref{Characterization}, we establish a characterization of the gamma family based on exponential tilting, and Section \ref{Unbiasedness} uses this characterization to obtain an alternative proof of the unbiasedness of the Gini estimator for gamma populations. In Section~\ref{sec:sim}, we propose bias-corrected estimators for the Poisson and geometric cases and report results from a simulation study assessing the finite-sample performance of all estimators. Finally, in Section \ref{concluding_remarks}, we conclude with a discussion of implications and directions for future research.

\section{The Gini coefficient}\label{sec:02}

Let \(X_1, X_2\) be independent and identically distributed (i.i.d.) 
random variables with common distribution as a non-degenerate nonnegative
random variable $X$, and finite mean \(\mu=\mathbb{E}[X]>0\).
The Gini coefficient \citep{Gini1936} is defined as
\begin{equation*}
	G \equiv G(X)=\frac{\mathbb{E}\lvert X_1-X_2\rvert}{2\mu}.
\end{equation*}

The Gini coefficient possesses the following well-known fundamental properties and representations:

\begin{itemize}
	\item[(G1)] \textit{Existence:} the Gini coefficient exists and satisfies \(0\leqslant G<1\).
	
	\item[(G2)] \textit{Scale invariance:} for any \(b>0\),
	$
	G(bX)=G(X).
	$
	
	\item[(G3)] \textit{Lack of translation invariance:} for any \(a>0\),
	$
	G(a+X)={\mu}G(X)/(a+\mu).
	$
	\item[(G4)] \textit{Covariance and quantile representations:}
	$
	G
	=
	\mathrm{Cov}\!\left(X,\,2F_X(X)-1\right)/\mu
	=
	\int_0^1 F_X^{-1}(p)(2p-1)\,\mathrm{d}p,
	$
	where $F_X$ is the cumulative distribution function (CDF) of $X$ and $F_X^{-1}$ its generalized inverse function.
	
	\item[(G5)] \textit{Decomposition via Lorenz measures of inequality:} 
	$
	G=[D_1(F_X)+G_2(F_X)]/2,
	$
	where
	$D_n(F_X)=(n+1)\mathbb{E}\left[(U-L(U))U^{\,n-1}\right]$ and 
	$G_n(F_X)=n(n-1)\mathbb{E}\left[(U-L(U))(1-U)^{\,n-2}\right]$, $n\geqslant 1$,
	are the Lorenz measure of inequality  \citep{Aaberge2000} and
	the generalized Gini measure 
	\citep{Donaldson1980,Kakwani1980,Yitzhaki1983}, respectively,
	with \(U\sim\mathrm{U}(0,1)\), and
	$
		L(p)=\int_0^p F_X^{-1}(t)\,\mathrm{d}t/\mu,
		\, 0\leqslant p\leqslant 1,
	$
	being the Lorenz curve associated with $X$.
\end{itemize}

The following result is well known in the literature; for completeness and the reader's convenience, we briefly outline the main idea of its proof.
\begin{proposition}\label{chave-prop}
The Gini coefficient $G$ of a nonnegative
random variable $X$ can be characterized as
\begin{align*}
	G={\int_{0}^{\infty}F_X(x^-)[1-F_X(x^-)]{\rm d}x\over \mu},
\end{align*}
where $\mu=\mathbb{E}[X]>0$ is the mean of $X$, $F_X$ is the CDF of $X$, and $F_X(x^-)=\mathbb{P}(X<x)$.
\end{proposition}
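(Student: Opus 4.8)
The plan is to reduce the computation of $\mathbb{E}\lvert X_1-X_2\rvert$ to an integral of indicator functions, exploiting the fact that $\lvert X_1-X_2\rvert$ is nonnegative. The starting point is the elementary pointwise identity, valid for any $a,b\geqslant 0$,
\begin{equation*}
	\lvert a-b\rvert=\int_{0}^{\infty}\bigl\lvert \mathbf{1}\{a\geqslant x\}-\mathbf{1}\{b\geqslant x\}\bigr\rvert\,\mathrm{d}x,
\end{equation*}
which holds because the integrand equals $\mathbf{1}\{\min(a,b)<x\leqslant\max(a,b)\}$, and the interval $(\min(a,b),\max(a,b)]$ has length $\lvert a-b\rvert$. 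Substituting $a=X_1$, $b=X_2$ and taking expectations is the first key step.

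Next I would interchange expectation and integration. Since the integrand is nonnegative and jointly measurable, Tonelli's theorem applies and yields
\begin{equation*}
	\mathbb{E}\lvert X_1-X_2\rvert=\int_{0}^{\infty}\mathbb{E}\bigl\lvert \mathbf{1}\{X_1\geqslant x\}-\mathbf{1}\{X_2\geqslant x\}\bigr\rvert\,\mathrm{d}x,
\end{equation*}
so that the integrand is now a function of two independent Bernoulli variables.

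For the third step I would evaluate the inner expectation using independence. Writing $q(x)=\mathbb{P}(X\geqslant x)=1-F_X(x^-)$, the quantity $\bigl\lvert \mathbf{1}\{X_1\geqslant x\}-\mathbf{1}\{X_2\geqslant x\}\bigr\rvert$ equals $1$ precisely when exactly one of the two copies is at least $x$, an event of probability $2q(x)[1-q(x)]=2F_X(x^-)[1-F_X(x^-)]$. Combining the three steps and dividing by $2\mu$ gives the claimed representation.

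The computation itself is routine; the only point requiring care is the choice of $\geqslant$ rather than $>$ in the indicators, which is what produces the left-limit $F_X(x^-)=\mathbb{P}(X<x)$ rather than $F_X(x)$ in the final formula. Since $F_X$ and its left-continuous version differ only on the at-most-countable set of atoms of $X$, which is Lebesgue-null, either choice yields the same value of the integral; using $\geqslant$ simply matches the stated form directly and keeps the representation convenient for the discrete Poisson and geometric applications that follow.
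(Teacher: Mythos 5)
Your proposal is correct and follows essentially the same route as the paper: the paper's basic identity $\mathds{1}_{(x,\infty)}(t)+\mathds{1}_{(y,\infty)}(t)-2\,\mathds{1}_{(x,\infty)}(t)\mathds{1}_{(y,\infty)}(t)$ is exactly your $\bigl\lvert \mathbf{1}\{a\geqslant x\}-\mathbf{1}\{b\geqslant x\}\bigr\rvert$ written in the algebraic form $u+v-2uv=\lvert u-v\rvert$ valid for indicators, followed by the same Tonelli-plus-independence computation.
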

\begin{proof}
The proof is immediate upon using the basic identity:
\begin{align}\label{basic-identity}
	\vert x-y\vert
	=
	\int_0^{\infty}
	\left[
	\mathds{1}_{(x,\infty)}(t)
	+
	\mathds{1}_{(y,\infty)}(t)
	-
	2
	\mathds{1}_{(x,\infty)}(t)
		\mathds{1}_{(y,\infty)}(t)
	\right]
	{\rm d}t,
\end{align}
where $\mathds{1}_{A}(t)$ is the indicator function of a set $A$.
\end{proof}

\begin{proposition}\label{pro-2}
The Gini coefficient $G$ of a discrete
random variable $X\in\{0,1,\ldots\}$ can be characterized as
\begin{align*}
	G={\sum_{k=0}^\infty F_X(k)[1-F_X(k)]\over \mu}.
\end{align*}
\end{proposition}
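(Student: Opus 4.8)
The plan is to specialize the continuous integral representation of Proposition~\ref{chave-prop} to the lattice-supported case, exploiting the fact that the CDF of an integer-valued random variable is piecewise constant. The starting point is the identity from Proposition~\ref{chave-prop},
\[
G = \frac{1}{\mu}\int_0^\infty F_X(x^-)\,[1 - F_X(x^-)]\,\mathrm{d}x,
\]
with $F_X(x^-) = \mathbb{P}(X < x)$. Since $X\in\{0,1,2,\ldots\}$, the integrand changes value only at integer points, so the first task is simply to identify its value on each unit subinterval.

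First I would partition the half-line as $(0,\infty) = \bigcup_{k=0}^\infty (k, k+1]$ and evaluate $F_X(x^-)$ on each block. For $x \in (k, k+1]$, integrality of $X$ gives $\{X < x\} = \{X \leqslant k\}$, whence $F_X(x^-) = \mathbb{P}(X \leqslant k) = F_X(k)$, a constant on that block. Consequently the integrand $F_X(x^-)[1 - F_X(x^-)]$ equals $F_X(k)[1 - F_X(k)]$ throughout $(k, k+1]$, and since each block has unit Lebesgue measure, $\int_{k}^{k+1} F_X(x^-)[1 - F_X(x^-)]\,\mathrm{d}x = F_X(k)[1 - F_X(k)]$.

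I would then sum the block contributions. Because the integrand is nonnegative, Tonelli's theorem (equivalently, monotone convergence) justifies interchanging the integral over $(0,\infty)$ with the sum over $k$, yielding
\[
\int_0^\infty F_X(x^-)[1 - F_X(x^-)]\,\mathrm{d}x = \sum_{k=0}^\infty F_X(k)[1 - F_X(k)],
\]
and dividing by $\mu$ gives the claimed identity. The only points requiring care — and the closest thing to an obstacle — are the correct treatment of the left limit $F_X(x^-)$ at the integer endpoints (which is why the half-open blocks $(k,k+1]$ are the convenient partition, and which is in any case immaterial for the integral since single points are Lebesgue-null) and verifying the $k=0$ block, where $F_X(x^-) = \mathbb{P}(X = 0) = F_X(0)$ by nonnegativity of $X$; neither raises any genuine difficulty, so the result is essentially a direct corollary of Proposition~\ref{chave-prop}.
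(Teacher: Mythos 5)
Your proof is correct and follows essentially the same route as the paper's: both specialize Proposition~\ref{chave-prop} by exploiting the piecewise constancy of $F_X(x^-)$ on unit intervals and summing the unit-length block contributions. If anything, your choice of the half-open blocks $(k,k+1]$ handles the integer endpoints slightly more carefully than the paper's $[k,k+1)$, on which the stated identity $F_X(x^-)=F_X(\lfloor x\rfloor)$ fails at the left endpoint $x=k$ --- a measure-zero discrepancy that is immaterial to the integral in either case.
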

\begin{proof}
	Since
	\[
	F_X(x^-)=F_X(\lfloor x\rfloor)=F_X(k), \quad x\in[k,k+1),
	\]
	where $\lfloor x\rfloor$ denotes the floor function, the function $F_X(x^-)$
	is constant on each interval $[k,k+1)$. Consequently,
	\[
	\int_{0}^{\infty} F_X(x^-)\,[1-F_X(x^-)]\,\mathrm{d}x
	=
	\sum_{k=0}^{\infty}\int_{k}^{k+1}
	F_X(\lfloor x\rfloor)\,[1-F_X(\lfloor x\rfloor)]\,\mathrm{d}x .
	\]
	Therefore, by Proposition~\ref{chave-prop}, the proof follows.
\end{proof}

	\section{A general expression for the expectation of the Gini  estimator}\label{Deriving estimator biases}

In this section (Theorem~\ref{main-theo}), we derive a general expression for the expectation of the Gini coefficient estimator $\widehat{G}$.

Let $X_1,\ldots,X_n$ be i.i.d. copies of $X$. Following  \cite{Pena2025}, we define an estimator of the Gini coefficient $G$ by
\begin{align}\label{def-G-estimator}
	\widehat{G}
	=
	\begin{cases}
	\dfrac{ {1\over \binom{n}{2}} \sum_{1\leqslant i<j\leqslant n}\vert X_i-X_j\vert}{2\overline{X} },
& \text{if} \  \sum_{k=1}^{n}X_k>0 \ \text{a.s.},
\\[0,3cm]
0, & \text{if} \  \sum_{k=1}^{n}X_k=0 \ \text{a.s.},
	\end{cases}
\end{align}
where
$
\overline{X}=\sum_{k=1}^{n} X_k / n
$
denotes the sample mean. Note that, in the case where
\(\sum_{k=1}^{n} X_k > 0\) almost surely, \(\widehat{G}\) coincides with the
upward adjusted Gini coefficient estimator originally proposed by
\citet{Deltas2003}.

\begin{remark}
In an earlier version of this work, the following result was established only for i.i.d. absolutely continuous random variables, independently of the results obtained in \cite{Pena2025}. Becoming aware of that reference motivated us to extend this result to i.i.d. random variables in a more general setting, which can be further extended, following the approach of \cite{Pena2025}, by removing the assumption that the variables involved are identically distributed.
\end{remark}

\begin{theorem}\label{main-theo}
Let \(X_1,\ldots,X_n\) be independent copies of a nonnegative
random variable \(X\) with finite mean $\mu=\mathbb{E}[X]>0$ and CDF \(F_X\).
Then
\[
\mathbb{E}\left[\widehat{G}\right]
=
n
\int_{0}^{\infty}
\mathbb{E}[Y_z]
G(Y_z)
\mathscr{L}_X^{\,n}(z)\,
\mathrm{d}z,
\]
where \(\mathscr{L}_X(z)=\mathbb{E}\left[\exp\{-zX\}\right]\) denotes the Laplace transform of \(X\).
For each \(z>0\), \(G(Y_z)\) is the Gini coefficient of 
an exponentially tilted (or Esscher-transformed) random variable
\(Y_z\) \citep{Butler2007}, whose CDF is given by
\begin{align}\label{notation}
	F_{Y_z}(t)
	=
	\frac{\mathbb{E}\left[\exp\{-zX\} \mathds{1}_{(0,t]}(X)\right]}
	{\mathscr{L}_X(z)},
	\quad t>0.
\end{align}
In the above, we implicitly assume that the Lebesgue–Stieltjes and improper integrals involved exist.
\end{theorem}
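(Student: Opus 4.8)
The plan is to collapse the expectation of \(\widehat{G}\) onto the Gini mean difference of a \emph{single} tilted pair, using the elementary Laplace identity \(1/s=\int_{0}^{\infty}e^{-zs}\,\mathrm{d}z\) (valid for \(s>0\)) together with Tonelli's theorem, which applies cleanly here because every quantity in sight is nonnegative. First I would absorb the combinatorial constant. Writing \(S_n=\sum_{k=1}^{n}X_k\), so that \(\overline{X}=S_n/n\), and using \(2\binom{n}{2}=n(n-1)\), the estimator on \(\{S_n>0\}\) becomes
\[
\widehat{G}=\frac{1}{n-1}\,\frac{\sum_{1\leqslant i<j\leqslant n}|X_i-X_j|}{S_n}.
\]
On the complementary event \(\{S_n=0\}\) every \(X_k=0\), so the numerator \(\sum_{i<j}|X_i-X_j|\) vanishes and \(\widehat{G}=0\) by definition; hence this event contributes nothing and the reciprocal identity, used only where \(S_n>0\), causes no trouble.

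Next I would insert \(1/S_n=\int_{0}^{\infty}e^{-zS_n}\,\mathrm{d}z\) and interchange the expectation with the \(z\)-integral, which Tonelli justifies since the integrand is nonnegative, obtaining
\[
\mathbb{E}\big[\widehat{G}\big]=\frac{1}{n-1}\int_{0}^{\infty}\mathbb{E}\Big[\sum_{1\leqslant i<j\leqslant n}|X_i-X_j|\,e^{-zS_n}\Big]\,\mathrm{d}z.
\]
By the i.i.d. assumption each of the \(\binom{n}{2}\) summands has the same expectation, so it suffices to evaluate \(\mathbb{E}[|X_1-X_2|\,e^{-zS_n}]\). Factoring \(e^{-zS_n}=e^{-zX_1}e^{-zX_2}\prod_{k\geqslant 3}e^{-zX_k}\) and using independence, the last \(n-2\) factors contribute \(\mathscr{L}_X(z)^{\,n-2}\), leaving the two-variable term \(\mathbb{E}\big[|X_1-X_2|\,e^{-z(X_1+X_2)}\big]\).

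The crux is to recognize this two-variable expectation as a Gini mean difference of the tilted law. With the Esscher transform \(\mathrm{d}F_{Y_z}(x)=e^{-zx}\,\mathrm{d}F_X(x)/\mathscr{L}_X(z)\), i.i.d. copies \(Y_z^{(1)},Y_z^{(2)}\) satisfy, by the change of measure,
\[
\mathbb{E}\big[|X_1-X_2|\,e^{-z(X_1+X_2)}\big]=\mathscr{L}_X(z)^{2}\,\mathbb{E}\big|Y_z^{(1)}-Y_z^{(2)}\big|=2\,\mathscr{L}_X(z)^{2}\,\mathbb{E}[Y_z]\,G(Y_z),
\]
the final equality being exactly the definition \(G(Y_z)=\mathbb{E}|Y_z^{(1)}-Y_z^{(2)}|/(2\,\mathbb{E}[Y_z])\). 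Substituting back, each pair contributes \(2\,\mathscr{L}_X(z)^{n}\,\mathbb{E}[Y_z]G(Y_z)\); collecting the prefactors gives \(\tfrac{1}{n-1}\binom{n}{2}\cdot 2=n\), which yields the asserted representation.

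The step I expect to demand the most care is measure-theoretic rather than estimative. One must take the tilted law over the \emph{closed} half-line so that any atom of \(X\) at the origin is retained: if the atom at \(0\) is dropped, then \(\mathbb{E}|Y_z^{(1)}-Y_z^{(2)}|\) loses precisely the cross terms in which one of \(X_1,X_2\) equals \(0\), and the identity above fails for distributions with \(\mathbb{P}(X=0)>0\) (such as the Poisson and geometric cases treated later). Beyond this, one must invoke the finiteness hypothesis stated at the end of the theorem to ensure the final integral converges, so that the chain of equalities is a genuine identity of finite quantities and not merely one of possibly infinite nonnegative expressions; the reciprocal identity and the Tonelli interchange are then routine.
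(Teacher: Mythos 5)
Your proof is correct, and its first half---inserting \(1/S_n=\int_0^\infty e^{-zS_n}\,\mathrm{d}z\) on \(\{S_n>0\}\), disposing of the event \(\{S_n=0\}\), invoking Tonelli, and using the i.i.d.\ structure to factor out \(\mathscr{L}_X^{\,n-2}(z)\)---is exactly the paper's opening step. Where you genuinely diverge is in the key identification of the two-variable term. The paper expands \(|X_1-X_2|\) through the integral identity \eqref{basic-identity}, rewrites the resulting double integral in terms of \(F_{Y_z}(t^-)\bigl[1-F_{Y_z}(t^-)\bigr]\), and then invokes its CDF characterization of the Gini coefficient (Proposition~\ref{chave-prop}) to recognize \(\mathbb{E}[Y_z]G(Y_z)\). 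You instead observe that tilting the product measure by \(e^{-z(X_1+X_2)}\) factorizes into two independent Esscher tilts, so that \(\mathbb{E}\bigl[|X_1-X_2|e^{-z(X_1+X_2)}\bigr]=\mathscr{L}_X^2(z)\,\mathbb{E}\bigl|Y_z^{(1)}-Y_z^{(2)}\bigr|=2\,\mathscr{L}_X^2(z)\,\mathbb{E}[Y_z]G(Y_z)\), the last step being the definition of \(G\) itself. Your route is shorter and more conceptual---it makes transparent \emph{why} the tilted Gini coefficient appears---and it is essentially the identity the paper records only afterwards, in Remark~\ref{rem-main}; the paper's longer route has the advantage of staying entirely inside the CDF framework (Propositions~\ref{chave-prop} and~\ref{pro-2}) that it then reuses verbatim in the Poisson and geometric computations. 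Your prefactor bookkeeping \(\tfrac{1}{n-1}\binom{n}{2}\cdot 2=n\) is right, and your final caveat about finiteness is actually superfluous: since \(\sum_{i<j}|X_i-X_j|\leqslant (n-1)S_n\), one has \(\widehat G\leqslant 1\) pointwise, so all quantities in the Tonelli chain are automatically finite.

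One further point in your favor: your warning about retaining the atom at the origin is not pedantry but corrects a genuine imprecision. As written, formula \eqref{notation} (and the indicator \(\mathds{1}_{(0,t)}\) appearing in the paper's proof) excludes the mass at \(X=0\); for distributions with \(\mathbb{P}(X=0)>0\)---precisely the Poisson and geometric cases to which the theorem is later applied---that \(F_{Y_z}\) is not a proper CDF, and the identity would lose exactly the cross terms you describe. The correct tilted law uses \(\mathds{1}_{\{X\leqslant t\}}\), which is what the paper implicitly assumes when it asserts that the tilted Poisson (resp.\ geometric) variable is again Poisson (resp.\ geometric). Your formulation states this correctly.
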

\begin{proof}
	Using the identity 		$
	\int_{0}^\infty \exp(-w z){\rm d}z={1/w},
	\, w>0,
	$ together with Tonelli's theorem, we can express $\mathbb{E}\left[\widehat{G}\right]$ as follows
\begin{align*}
	\mathbb{E}\left[\widehat{G}\right]
	&=
	{n\over 2\binom{n}{2}}
	\sum_{1\leqslant i<j\leqslant n}
	\int_{0}^{\infty}
	\mathbb{E}\left[\vert X_i-X_j\vert \exp\{-z(X_1+X_2)\}\exp\left\{-z\sum_{k=3}^{n}X_k\right\}\right]
	{\rm d}z
	\\[0,2cm]
	&=
	{n\over 2}
		\int_{0}^{\infty}
	\mathbb{E}\left[\vert X_1-X_2\vert \exp\{-z(X_1+X_2)\}\right]
	\mathscr{L}_X^{n-2}(z)
	{\rm d}z,
\end{align*}
where, in the last equality, we used the fact that $X_1,\ldots,X_n$
are i.i.d. copies of $X$.
Applying the identity in \eqref{basic-identity}, the above identity can be written as
%
\begin{align*}
\mathbb{E}\left[\widehat{G}\right]
	=
	{n}
\int_{0}^{\infty}
\left\{
\int_0^{\infty}
\mathbb{E}
\left[
\exp\{-zX\}
\mathds{1}_{(0,t)}(X)	
\right]
\left\{
\mathscr{L}_X(z)
-
\mathbb{E}
\left[
\exp\{-zX\}
\mathds{1}_{(0,t)}(X)	
\right]
\right\}
{\rm d}t \, 
\mathscr{L}_X^{n-2}(z)
\right\}
{\rm d}z.
\end{align*}

Using the notation of $F_{Y_z}$ in \eqref{notation}, the above identity becomes
\begin{align}\label{charact-1}
\mathbb{E}\left[\widehat{G}\right]
	=
	{n}
	\int_{0}^{\infty}
	\left\{
	\int_0^{\infty}
	F_{Y_z}(t^-) 
	\left[
	1
	-
	F_{Y_z}(t^-)
	\right] 
	{\rm d}t \, 
	\mathscr{L}_X^{n}(z)
	\right\}
	{\rm d}z.
\end{align}
Since, for each fixed \(z>0\), the mapping
\(t\in(0,\infty)\mapsto F_{Y_z}(t)\) in \eqref{notation} is the CDF of $Y_z$,
Proposition \ref{chave-prop} yields
\begin{align}\label{charact-2}
		\int_0^{\infty}
	F_{Y_z}(t^-) 
	\left[
	1
	-
	F_{Y_z}(t^-)
	\right] 
	{\rm d}t
	=
	\mathbb{E}[Y_z]G(Y_z),
\end{align}
where $G(Y_z)$ denotes the Gini coefficient of $Y_z$.

Finally, combining \eqref{charact-1} and \eqref{charact-2} completes the proof of the theorem.
\end{proof}

\begin{remark}\label{rem-main}
Observe that $\mathbb{E}[Y_z]$ and $G(Y_z)$ in Theorem \ref{main-theo} satisfy the following relations
	\begin{align*}
\mathbb{E}[Y_z]
=
{\mathbb{E}[X\exp\{-z X\}]\over \mathscr{L}_X(z)}
=
-
{\partial \log\left(\mathscr{L}_X(z)\right)\over \partial z}
	\end{align*}
	and
	\begin{align*}
	G(Y_z)
	=
	{
	\mathbb{E}\left[\vert X_1-X_2\vert \exp\{-z(X_1+X_2)\}\right]
	\over 
	2 	\mathbb{E}\left[X\exp\{-z(X)\}\right] \mathscr{L}_X(z)
}
=
A_1(z) A_2(z) \, G,
	\end{align*}
	where $G=G(X)$ is the Gini coefficient of $X$, $A_1$ is a weighted difference factor,
	\begin{align*}
	A_1(z)\equiv
	{\mathbb{E}\left[\vert X_1-X_2\vert \exp\{-z(X_1+X_2)\}\right]\over \mathbb{E}\vert X_1-X_2\vert} 
	\leqslant 1,
	\end{align*}
	because $\exp\{-z(X_1+X_2)\}$ is bounded above by 1,
	and $A_2$ is a mean adjustment factor,
	\begin{align*}
		A_2(z)\equiv {\mu\over \mathbb{E}[X\exp\{-zX\}] \mathscr{L}_X(z)}\geqslant 1,
	\end{align*}
	since, by the generalized Chebyshev sum inequality,
	$X$ and $\exp\{-z X\}$ are negatively correlated.
\end{remark}

By applying Theorem \ref{main-theo} together with Remark \ref{rem-main}, we have
\begin{corollary}\label{corollary-theo}
	Under the assumptions of Theorem \ref{main-theo},
	\[
	\mathbb{E}\left[\widehat{G}\right]
	=
	\left[
	-
	n
	\int_{0}^{\infty}
	{\partial \log\left(\mathscr{L}_X(z)\right)\over \partial z} \, 
	A_1(z) A_2(z)
	\mathscr{L}_X^{\,n}(z)\,
	\mathrm{d}z
	\right]
	G,
	\]
	where $G=G(X)$ is the Gini coefficient of $X$.
\end{corollary}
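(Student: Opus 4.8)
The plan is to obtain the stated identity by a direct substitution of the two relations recorded in Remark~\ref{rem-main} into the integral representation of Theorem~\ref{main-theo}.

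First I would recall from Theorem~\ref{main-theo} that
\[
\mathbb{E}\left[\widehat{G}\right]
=
n
\int_{0}^{\infty}
\mathbb{E}[Y_z]\,
G(Y_z)\,
\mathscr{L}_X^{\,n}(z)\,
\mathrm{d}z.
\]
The key observation is that Remark~\ref{rem-main} already supplies closed forms for both factors in the integrand: the mean of the tilted variable as a logarithmic derivative,
\[
\mathbb{E}[Y_z] = -\,\frac{\partial \log\!\left(\mathscr{L}_X(z)\right)}{\partial z},
\]
and its Gini coefficient as a rescaling of $G=G(X)$,
\[
G(Y_z) = A_1(z)\,A_2(z)\,G.
\]

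Second I would substitute these into the integrand to obtain
\[
\mathbb{E}[Y_z]\,G(Y_z)
=
-\,\frac{\partial \log\!\left(\mathscr{L}_X(z)\right)}{\partial z}\,
A_1(z)\,A_2(z)\,G,
\]
and then, since $G$ is a fixed constant independent of the integration variable $z$, factor it outside the integral. This yields precisely the claimed expression for $\mathbb{E}[\widehat{G}]$.

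The argument is essentially mechanical once Remark~\ref{rem-main} is granted, so there is no genuine obstacle at the level of the substitution itself. The only point requiring care---already addressed in the remark---is the validity of the logarithmic-derivative identity for $\mathbb{E}[Y_z]$, which rests on differentiating $\mathscr{L}_X(z)=\mathbb{E}[\exp\{-zX\}]$ under the expectation sign, together with the integrability hypotheses stated at the end of Theorem~\ref{main-theo} that ensure the resulting integral is well defined. Granting these, the corollary follows immediately.
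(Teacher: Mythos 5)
Your proposal is correct and follows exactly the paper's own route: the corollary is obtained by substituting the two identities of Remark~\ref{rem-main} (the logarithmic-derivative form of $\mathbb{E}[Y_z]$ and the factorization $G(Y_z)=A_1(z)A_2(z)\,G$) into the integral representation of Theorem~\ref{main-theo} and pulling the constant $G$ out of the integral. Nothing further is needed.
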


\subsection{Biased Gini estimator in Poisson populations}\label{Poisson-populations}

If $X\sim\mathrm{Poisson}(\lambda)$, $\lambda>0$, then the exponentially tilted variable
$Y_z$ is Poisson with mean $\lambda \exp\{-z\}$. By Proposition \ref{pro-2}, the
Gini coefficient of $X$ is
\[
G(X)=\exp\{-2z\}\big[I_0(2\lambda)+I_1(2\lambda)\big],
\]
where $I_0$ and $I_1$ denote the modified Bessel functions of the first kind.
By the same argument,
\[
G(Y_z)
=
\exp\{-2\lambda \exp\{-z\}\}
\big[I_0(2\lambda \exp\{-z\})+I_1(2\lambda \exp\{-z\})\big].
\]
Moreover, since $\mathbb{E}[Y_z]=\lambda \exp\{-z\}$ and
$\mathscr{L}_X(z)=\exp\{\lambda(\exp\{-z\}-1)\}$, the change of variables
$w=\exp\{-z\}$ in Theorem \ref{main-theo} yields
\[
\mathbb{E}\left[\widehat{G}\right]
=
n\lambda \exp\{-nz\}
\int_{0}^{1}
\exp\{(n-2)\lambda w\}
\big[I_0(2\lambda w)+I_1(2\lambda w)\big]
\,\mathrm{d}w.
\]
Although this integral has no closed form in terms of elementary or classical
special functions, the integrand is continuous on $[0,1]$; hence the integral is
finite and $\mathbb{E}\left[\widehat{G}\right]$ is well defined.

Moreover, since $1\leqslant I_0(2\lambda w)+I_1(2\lambda w)\leqslant I_0(2\lambda)+I_1(2\lambda)$, we get
\begin{align*}
	{\exp\{-2\lambda\}-\exp\{-n\lambda\}\over (1-{2\over n})}
	\leqslant 
	\mathbb{E}\left[\widehat{G}\right]
	\leqslant
	{1-\exp\{-(n-2)\lambda\}\over (1-{2\over n})}
	\, G,
	\quad 
	n\geqslant 1, n\neq 2,
\end{align*}
and since
$
	\int_{0}^{1}
	\bigl[I_0(2\lambda w)+I_1(2\lambda w)\bigr]\,
	\mathrm{d}w
	=
	[{I_0(2\lambda)+I_1(2\lambda)-1}]/({2\lambda}),
$
we have
\begin{align*}
	\mathbb{E}\left[\widehat{G}\right]
	=
G-\exp\{-2\lambda\},
\quad 
n=2.
\end{align*}

In other words, the estimator 
$\widehat{G}$
is biased for Poisson populations,
with bias
\begin{align}\label{bias_poisson}
\text{Bias}(\widehat{G},G)
=
	n\lambda\exp\{-n\lambda\}
\int_{0}^{1}
\exp\{(n-2)\lambda w\}
[I_0(2\lambda w)+I_1(2\lambda w)]
{\rm d}w
-
\exp\{-2\lambda\}[I_0(2\lambda)+I_1(2\lambda)]
\end{align}
satisfying 
\begin{align*}
	{\exp\{-2\lambda\}-\exp\{-n\lambda\}\over (1-{2\over n})}
-
G
	\leqslant 
\text{Bias}(\widehat{G},G)
	\leqslant
	\left[
	{1-\exp\{-(n-2)\lambda\}\over (1-{2\over n})}
	-
	1
	\right]
	G,
	\quad 
	n\geqslant 1, n\neq 2,
\end{align*}
and
\begin{align*}
\text{Bias}(\widehat{G},G)
=
-\exp\{-2\lambda\},
\quad 
n=2.
\end{align*}

\subsection{Biased Gini estimator in geometric populations}\label{Geometric-populations}

If $X\sim\mathrm{Geometric}(p)$, $p\in(0,1)$, on $\{0,1,\ldots\}$, then the exponentially
tilted variable $Y_z$ is geometric with parameter
$1-(1-p)\exp\{-z\}$. By Proposition \ref{pro-2},
\[
G(X)=\frac{1}{2-p},
\quad
G(Y_z)=\frac{1}{1+(1-p)\exp\{-z\}}.
\]
Moreover, since
\[
\mathbb{E}[Y_z]=\frac{(1-p)e^{-z}}{1-(1-p)\exp\{-z\}}
\quad\text{and}\quad
\mathscr{L}_X(z)=\frac{p}{1-(1-p)\exp\{-z\}},
\]
the change of variables $w=(1-p)\exp\{-z\}$ yields the representation of
$\mathbb{E}[\widehat{G}]$ in Theorem \ref{main-theo} as
\begin{align*}
\mathbb{E}\left[\widehat{G}\right]
=
n p^n
\int_{0}^{1-p}
{1\over (1-w)^{n+1} (1+w)} 
\,
\mathrm{d}w
=
{1\over 2}
\left[
 _2F_1\left(1,n;n+1;{p\over 2}\right)- p^{n} \, _2F_1\left(1,n;n+1;{1\over 2}\right)
\right],
\end{align*}
where in the last equality we have used the standard integral representation of the Gauss hypergeometric function \citep[see Chapter 15, Item 15.3. in][]{Abramowitz1972}:
\begin{equation*}
	{}_2F_1(a,b;c;z)
	=
	\frac{\Gamma(c)}{\Gamma(b)\Gamma(c-b)}
	\int_{0}^{1}
	t^{b-1}(1-t)^{c-b-1}(1-zt)^{-a}\,{\rm d}t,
	\quad
	c>b>0.
\end{equation*}

Since, for $0\leqslant w\leqslant 1-p$, it holds that $G\leqslant 1/(1+w)\leqslant 1$, it follows that
\begin{align*}
	(1-p^n)\, G
	=
	\left[
		n p^n
	\int_{0}^{1-p}
	{1\over (1-w)^{n+1}} 
	\,
	\mathrm{d}w
	\right]
	G
	\leqslant
	\mathbb{E}\left[\widehat{G}\right]
	\leqslant
	n p^n
	\int_{0}^{1-p}
	{1\over (1-w)^{n+1}} 
	\,
	\mathrm{d}w
	=
	{1-p^n}.
\end{align*}

In other words, the estimator $\widehat{G}$ is biased for Geometric populations, with bias given by
\begin{align}\label{bias_geometric}
\text{Bias}(\widehat{G},G)
=
{1\over 2}
\left[
_2F_1\left(1,n;n+1;{p\over 2}\right)- p^{n} \, _2F_1\left(1,n;n+1;{1\over 2}\right)
\right]
-
{1\over 2-p}
\end{align}
which satisfies
\begin{align*}
	-p^n G
	\leqslant
	\text{Bias}(\widehat{G},G)
	\leqslant
	1	-p^n-G.
\end{align*}


\section{Characterization of the gamma family}\label{Characterization}

The next result presents a characterization of the gamma family which, to the best of our knowledge, has not previously appeared in the literature.

\begin{theorem}\label{main-theorem}
	Let $X$ be a nonnegative, absolutely continuous random variable belonging to a
	scale family, with density $f_X$. Assume that for every $z > 0$ there
	exists a function $\xi(z)>0$ such that, for almost every $t>0$,
	\begin{equation}
		\label{eq:char-gamma}
		\exp\{-zt\} f_X(t)
		=
		\frac{\mathscr{L}_X(z)}{\xi(z)}\,
		f_X\left(\frac{t}{\xi(z)}\right),
	\end{equation}
	where $\mathscr{L}_{X}(z)=\mathbb{E}[\exp\{-zX\}]$ denotes the Laplace transform of $X$, and $F_X$ is its corresponding CDF.
	Then $X \sim \text{Gamma}(\alpha,\lambda)$, for some $\alpha>0$, $\lambda>0$ (gamma distribution). More precisely,
	\[
	f_X(x)={\lambda^\alpha\over\Gamma(\alpha)}\, x^{\alpha-1}\exp\{-\lambda x\}, \quad x>0,
	\]
	and necessarily
	\[
	\xi(z)=\frac{\lambda}{\lambda+z}.
	\]
	
	Conversely, if $X \sim \text{Gamma}(\alpha,\lambda)$, then Identity
	\eqref{eq:char-gamma} holds for all $z > 0$.
\end{theorem}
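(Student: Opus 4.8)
The plan is to recognize that the hypothesis \eqref{eq:char-gamma} is nothing but the statement that the exponentially tilted law $Y_z$ coincides in distribution with the rescaled variable $\xi(z)X$, and then to convert this distributional identity into a single functional equation for the Laplace transform, which I can solve. Concretely, I would multiply \eqref{eq:char-gamma} by $e^{-st}$, integrate over $t\in(0,\infty)$, and perform the change of variables $u=t/\xi(z)$ on the right-hand side; the factors of $\xi(z)$ cancel and I obtain
\[
\mathscr{L}_X(s+z)=\mathscr{L}_X(z)\,\mathscr{L}_X\!\big(s\,\xi(z)\big),\qquad s,z>0.
\]
Since $\mathscr{L}_X$ is positive and real-analytic on $(0,\infty)$ for a nonnegative variable, I set $\psi=\log\mathscr{L}_X$ (so $\psi(0^+)=0$ and $\psi'(0^+)=-\mu$ by the finite-mean hypothesis) and rewrite the equation as $\psi(s+z)-\psi(z)=\psi\big(s\,\xi(z)\big)$.

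Next I would pin down $\xi$. Differentiating in $s$ gives $\psi'(s+z)=\xi(z)\,\psi'\big(s\,\xi(z)\big)$, and letting $s\to0^+$ (legitimate because $\psi'(0^+)=-\mu$ is finite) yields the key relation $\xi(z)=\psi'(z)/\psi'(0^+)=-\psi'(z)/\mu$. In particular $\xi$ inherits the smoothness of $\psi'$, so I am now free to differentiate the functional equation in $z$ as well, obtaining $\psi'(s+z)-\psi'(z)=s\,\xi'(z)\,\psi'\big(s\,\xi(z)\big)$. Eliminating $\psi'(s+z)$ between the two differentiated identities and substituting $\psi'(w)=-\mu\,\xi(w)$ throughout collapses everything to a relation involving only $\xi$, namely $\xi\big(s\,\xi(z)\big)\big[\xi(z)-s\,\xi'(z)\big]=\xi(z)$.

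The decisive simplification is to set $\eta=1/\xi$: a short computation turns the last relation into $\eta\big(s/\eta(z)\big)=1+s\,\eta'(z)/\eta(z)$, and writing $r=s/\eta(z)$ gives $\eta(r)=1+r\,\eta'(z)$ for all $r,z>0$. Fixing $r$ and letting $z$ vary forces $\eta'$ to be constant, so $\eta$ is affine; positivity of $\eta$ (hence $\xi$) and non-degeneracy of $X$ exclude a nonpositive slope, so $\eta(z)=1+z/\lambda$, i.e. $\xi(z)=\lambda/(\lambda+z)$ for some $\lambda>0$. Feeding this back into $\psi'(z)=-\mu\,\xi(z)$ and integrating with $\psi(0)=0$ gives $\mathscr{L}_X(z)=\big(\lambda/(\lambda+z)\big)^{\mu\lambda}$, which is exactly the gamma Laplace transform with shape $\alpha=\mu\lambda$; uniqueness of Laplace transforms then identifies $X\sim\mathrm{Gamma}(\alpha,\lambda)$ and simultaneously confirms $\xi(z)=\lambda/(\lambda+z)$. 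For the converse I would simply insert the gamma density and $\xi(z)=\lambda/(\lambda+z)$ into \eqref{eq:char-gamma} and check that the powers of $\lambda/(\lambda+z)$ cancel, leaving $\tfrac{\lambda^\alpha}{\Gamma(\alpha)}t^{\alpha-1}e^{-(\lambda+z)t}$ on both sides.

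I expect the main obstacle to lie in the middle step: solving the functional equation rigorously. The passage to the Laplace transform and the two differentiations must be justified (analyticity of $\mathscr{L}_X$ on $(0,\infty)$ and the $s\to0^+$ limit, which is where the finite-mean hypothesis enters), and one must take care to reduce to the affine equation for $\eta$ using only $\psi'$ and $\xi'$, so as not to assume existence of higher moments a priori—these are recovered automatically once the gamma form is established. The remaining steps (the cancellation in the converse and the final integration of $\psi'$) are routine bookkeeping. Note that the scale-family framing is used only interpretively: it is precisely what makes \eqref{eq:char-gamma} assert that each tilted law $Y_z$ is again a rescaled member, and the argument above relies on \eqref{eq:char-gamma} directly rather than on any further structural assumption.
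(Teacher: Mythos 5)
Your proof is correct, and it takes a genuinely different route from the paper's. The paper argues directly at the level of densities: after using the scale-family structure to normalize $f_X=g$, it takes logarithms in \eqref{eq:char-gamma}, differentiates twice with respect to $t$ to get $[\log g(t)]''=\xi(z)^{-2}[\log g(t/\xi(z))]''$, infers that $x^{2}[\log g(x)]''$ must be a constant $C$, and solves this ODE to obtain $g(x)=e^{K}x^{-C}e^{Dx}$, with integrability forcing $1-C>0$ and $D<0$, i.e.\ a gamma density. You instead pass to the Laplace transform, turning \eqref{eq:char-gamma} into the functional equation $\mathscr{L}_X(s+z)=\mathscr{L}_X(z)\,\mathscr{L}_X\big(s\,\xi(z)\big)$, identify $\xi(z)=-\psi'(z)/\mu$ with $\psi=\log\mathscr{L}_X$, collapse everything to the affine equation $\eta(r)=1+r\,\eta'(z)$ for $\eta=1/\xi$, and recover $\mathscr{L}_X(z)=\big(\lambda/(\lambda+z)\big)^{\mu\lambda}$, concluding by uniqueness of Laplace transforms. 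Your route buys rigor where the paper is informal: $\mathscr{L}_X$ is automatically real-analytic on $(0,\infty)$, so every differentiation you perform is legitimate, whereas the paper tacitly assumes $\log g$ is twice differentiable (nothing in the hypotheses guarantees this, since \eqref{eq:char-gamma} holds only almost everywhere), and tacitly needs the scaling factors $\xi(z)$ to realize enough ratios for the conclusion that $x^{2}[\log g(x)]''$ takes the same value at all points; integrating against $e^{-st}$ also renders the a.e.\ qualifier harmless. Moreover, you derive the necessity claim $\xi(z)=\lambda/(\lambda+z)$ explicitly, which the paper asserts in the statement but never establishes in its proof. What the paper's route buys is brevity and directness: it produces the density formula in a few lines without invoking uniqueness of Laplace transforms. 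Both proofs share the observation that the scale-family hypothesis plays no essential role beyond normalization.

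One small repair to your write-up: you appeal to ``the finite-mean hypothesis'' for $\psi'(0^{+})=-\mu$, but Theorem \ref{main-theorem} does not assume $\mathbb{E}[X]<\infty$. This is easily fixed inside your own argument: $\psi$ is convex, so $\lim_{u\downarrow 0}\psi'(u)=-\mathbb{E}[X]$ exists in $[-\infty,0)$; if $\mathbb{E}[X]=\infty$, then letting $s\to 0^{+}$ in $\psi'(s+z)=\xi(z)\,\psi'\big(s\,\xi(z)\big)$ would force $\psi'(z)=-\infty$, contradicting finiteness of $\psi'$ on $(0,\infty)$. Hence the functional equation itself implies $\mathbb{E}[X]<\infty$, and your proof then proceeds unchanged.
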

\begin{proof}
Since $X$ belongs to a scale family, there exists a baseline density $g$ such that
\begin{align}\label{id-1}
	f_X(x)=\frac{1}{\theta}\, g\left(\frac{x}{\theta}\right),
\end{align}
for some $\theta>0$. Without loss of generality, assume that $\theta=1$. Equation \eqref{eq:char-gamma} can therefore be rewritten as
\begin{align*}
	\exp\{-zt\}\, g\left({t}\right)
	=
	\frac{\mathscr{L}_X(z)}{\xi(z)}\,
	g\left(\frac{t}{\xi(z)}\right).
\end{align*}
Taking logarithms and differentiating with respect to $t$ yields
\begin{align*}
	-z+
	\left[\log(g({t}))\right]'
	=
	\frac{1}{\xi(z)}
	\left[\log\left(g\left(\frac{t}{\xi(z)}\right)\right)\right]'.
\end{align*}
Differentiating once more with respect to $t$, we obtain
\begin{align*}
	\left[\log(g\left({t}\right))\right]''
	=
	\frac{1}{\xi^{2}(z)}
	\left[
	\log\left(g\left(\frac{t}{\xi(z)}\right)\right)
	\right]''.
\end{align*}
Letting $x=t$ and $y=t/\xi(z)$, the above equation becomes
\begin{align*}
	x^{2}\bigl[\log g(x)\bigr]''
	=
	y^{2}\bigl[\log g(y)\bigr]'',
\end{align*}
for all $x,y>0$. Hence, the quantity $x^{2}[\log g(x)]''$ must be constant. That is, there exists $C\in\mathbb{R}$ such that
\begin{align*}
	x^{2}\bigl[\log g(x)\bigr]''=C.
\end{align*}
Solving this differential equation yields
\begin{align}\label{id-2}
	g(x)=\exp\{K\}x^{-C}\exp\{Dx\},
\end{align}
for some constants $K$, $C$, and $D$. Since $g$ is a probability density on $(0,\infty)$, it follows that $-C>-1$ and $D<0$. Setting $\alpha=1-C$ and $\lambda=-D$, normalization together with \eqref{id-1} (with $\theta=1$) and \eqref{id-2} implies that $X\sim\mathrm{Gamma}(\alpha,\lambda)$.

The proof of the reciprocal implication is immediate and therefore omitted.
\end{proof}

\begin{corollary}\label{cor-main}
	Under the assumptions of Theorem \ref{eq:char-gamma},
	\[
	\frac{\mathbb{E}\left[\exp\{-zX\} \mathds{1}_{\{X\leqslant t\}}\right]}
	{\mathscr{L}_X(z)}
	=
	F_X\left(\frac{t}{\xi(z)}\right),
	\quad t>0;
	\qquad
	\xi(z)=\frac{\lambda}{\lambda+z},
	\quad z>0.
	\]
\end{corollary}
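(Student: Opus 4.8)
The plan is to prove Corollary~\ref{cor-main} by directly manipulating the left-hand side using the tilting identity \eqref{eq:char-gamma} established in Theorem~\ref{main-theorem}. First I would write the numerator as a Lebesgue--Stieltjes integral against the density $f_X$, namely
\begin{align*}
	\mathbb{E}\left[\exp\{-zX\}\mathds{1}_{\{X\leqslant t\}}\right]
	=
	\int_0^t \exp\{-zs\}\, f_X(s)\,\mathrm{d}s.
\end{align*}
Since $X$ is gamma-distributed (by the forward direction of Theorem~\ref{main-theorem}) and in particular belongs to the assumed scale family, the hypothesis \eqref{eq:char-gamma} applies, giving $\exp\{-zs\}f_X(s)=[\mathscr{L}_X(z)/\xi(z)]\,f_X(s/\xi(z))$ for almost every $s>0$.

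The key step is then to substitute this tilting identity into the integrand and factor out the $z$-dependent constant:
\begin{align*}
	\int_0^t \exp\{-zs\}\, f_X(s)\,\mathrm{d}s
	=
	\frac{\mathscr{L}_X(z)}{\xi(z)}
	\int_0^t f_X\!\left(\frac{s}{\xi(z)}\right)\mathrm{d}s.
\end{align*}
I would then perform the change of variables $u=s/\xi(z)$, so that $\mathrm{d}s=\xi(z)\,\mathrm{d}u$ and the upper limit becomes $t/\xi(z)$; the factor $\xi(z)$ cancels the $1/\xi(z)$ outside, yielding
\begin{align*}
	\int_0^t \exp\{-zs\}\, f_X(s)\,\mathrm{d}s
	=
	\mathscr{L}_X(z)\int_0^{t/\xi(z)} f_X(u)\,\mathrm{d}u
	=
	\mathscr{L}_X(z)\, F_X\!\left(\frac{t}{\xi(z)}\right).
\end{align*}
Dividing through by $\mathscr{L}_X(z)>0$ gives the claimed identity. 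The explicit value $\xi(z)=\lambda/(\lambda+z)$ is inherited directly from Theorem~\ref{main-theorem}.

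I do not anticipate a genuine obstacle here, since the result is essentially a restatement of the tilted CDF in \eqref{notation} under the gamma-specific scaling relation: the corollary merely records that exponential tilting of a gamma law amounts to a rescaling of its CDF by the factor $\xi(z)$. The only point requiring mild care is justifying the almost-everywhere identity \eqref{eq:char-gamma} can be integrated term-by-term over $(0,t]$ --- this is immediate because both sides are nonnegative and integrable, so the a.e.\ equality passes under the integral sign without difficulty. One should also note that $F_X(t/\xi(z))$ coincides with $F_{Y_z}(t)$ from \eqref{notation}, which makes the corollary the concrete manifestation of the scale-stability property characterizing the gamma family.
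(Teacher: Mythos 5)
Your proposal is correct and is exactly the argument the paper intends: the paper states Corollary~\ref{cor-main} without proof as an immediate consequence of Theorem~\ref{main-theorem}, and your derivation---integrating the a.e.\ identity \eqref{eq:char-gamma} over $(0,t]$, changing variables $u=s/\xi(z)$, and dividing by $\mathscr{L}_X(z)>0$, with $\xi(z)=\lambda/(\lambda+z)$ inherited from the theorem---is precisely that omitted computation.
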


\subsection{Unbiased Gini estimator in gamma populations}\label{Unbiasedness}

	Let $X$ be a nonnegative, absolutely continuous random variable belonging to a
scale family, with density $f_X$, satisfying \eqref{eq:char-gamma}. Using the notation $F_{Y_z}$ introduced in \eqref{notation}, Theorem \ref{main-theorem} implies that $X \sim \text{Gamma}(\alpha,\lambda)$ and Corollary \ref{cor-main} yields
\begin{align*}
	F_{Y_z}(t)
	=
	F_X\left(\frac{t}{\xi(z)}\right)
	=
	F_{\xi(z) X}(t),
	\quad
	\xi(z)=\frac{\lambda}{\lambda+z}.
\end{align*}
In other words, 
\begin{align*}
Y_z\stackrel{d}{=}\xi(z)X\sim \text{Gamma}(\alpha,\lambda+z).
\end{align*}
Hence, $E[Y_z]=\alpha/(\lambda+z)$, $\mathscr{L}_X(z)=\lambda^\alpha/(\lambda+z)^\alpha$, and since the Gini coefficient is scale invariant, 
\begin{align*}
G(Y_z)=G(X)=G=\frac{\Gamma(2\alpha+1)}{2^{2\alpha}\Gamma(\alpha+1)^2},
\end{align*}
where the last identity is well known in the literature and can be easily obtained from Proposition \ref{chave-prop}.
Applying Theorem \ref{main-theo}, we obtain
\begin{align*}
\mathbb{E}\left[\widehat{G}\right]
=
n
\int_{0}^{\infty}
\mathbb{E}[Y_z]
G(Y_z)
\mathscr{L}_X^{\,n}(z)\,
\mathrm{d}z
=
G
\left[
n\alpha\lambda^{n\alpha}
\int_{0}^{\infty}
{1\over(\lambda+z)^{n\alpha+1}}\,
\mathrm{d}z
\right]
=
G.
\end{align*}
That is, under gamma populations the Gini estimator $\widehat{G}$ is unbiased for the Gini coefficient $G$, as previously shown in \cite{Baydil2025} and \cite{VilaSaulomixture2025,VilaSaulomth2025}.

%
%
%
%
%
%

\section{Simulation study}\label{sec:sim}

This section reports a Monte Carlo study assessing the finite-sample performance of the Gini
estimator in~\eqref{def-G-estimator} (uncorrected) and its bias-corrected version obtained by subtracting
an estimate of the finite-sample bias. The correction is implemented in a plug-in fashion using
maximum likelihood (ML) estimation of the distributional parameter.

Let $X_1,\ldots,X_n$ be an i.i.d. sample from a Poisson or Geometric distribution.
The uncorrected estimator is given in \eqref{def-G-estimator}, that is,
\begin{align*}
	\widehat{G}
	=
	\begin{cases}
	\dfrac{ {1\over \binom{n}{2}} \sum_{1\leqslant i<j\leqslant n}\vert X_i-X_j\vert}{2\overline{X} },
& \text{if} \  \sum_{k=1}^{n}X_k>0 \ \text{a.s.},
\\[0,3cm]
0, & \text{if} \  \sum_{k=1}^{n}X_k=0 \ \text{a.s.},
	\end{cases}
\end{align*}
where
$
\overline{X}=\sum_{k=1}^{n} X_k / n
$
denotes the sample mean. The corrected estimator is defined as
\begin{equation}\label{eq:gini_hat_corr}
\widehat G^{\,c}
=
\widehat G - \widehat{\mathrm{Bias}}_n(\widehat\theta),
\end{equation}
where $\widehat\theta$ is the ML estimator of the model parameter and $\widehat{\mathrm{Bias}}_n(\widehat\theta)$ (Equations \eqref{bias_poisson} and \eqref{bias_geometric})
is obtained by evaluating the analytical bias expression of $\widehat G$ at $\widehat\theta$ and the
given sample size $n$.

For the Poisson model, we consider $X \sim \mathrm{Poisson}(\lambda)$ with
$\lambda \in \{0.5,1,2,5,10\}$. For each Monte Carlo replication and each pair $(n,\lambda)$,
the ML estimator of $\lambda$ is $\widehat\lambda=\bar X$. The plug-in bias correction is computed using
the analytical expression in Eq.~\eqref{bias_poisson}.

For the geometric model, we generate $X \sim \mathrm{Geometric}(p)$ on $\{0,1,\ldots\}$ with
$p \in \{0.1,0.2,0.4,0.6,0.8\}$. For each replication and each pair $(n,p)$, the ML estimator is
$\widehat p = 1/(1+\bar X)$ and the bias correction is obtained from Eq.~\eqref{bias_geometric}.

Sample sizes are set to $n \in \{25,50,75,100\}$, and each configuration is replicated
$R$ times, with $R$ sufficiently large to ensure stable Monte Carlo estimates
(in our implementation, $R=10{,}000$). For each replication $r$, we compute
$(\widehat G_r,\widehat G_{c,r})$.

Performance is summarized using the relative bias and the root mean squared error (RMSE),
defined respectively as
\[
\mathrm{RelBias}(\widehat G_e)=
\frac{1}{R}\sum_{r=1}^R\frac{\widehat G_{e,r}-G}{G},
\qquad
\mathrm{RMSE}(\widehat G_e)=
\left(\frac{1}{R}\sum_{r=1}^R(\widehat G_{e,r}-G)^2\right)^{1/2},
\]
where $\widehat G_e$ denotes either $\widehat G$ or $\widehat G_c$ and $G$ is the true Gini
coefficient under the data-generating model.

Figure~\ref{fig:mc_poisson} reports the Monte Carlo results for the Poisson model. Panels (a) and (b)
display the relative bias and RMSE as functions of the sample size $n$, for fixed values of
$\lambda$, while panels (c) and (d) show the same measures as functions of $\lambda$, for fixed
sample sizes. From this figure, we observe that the bias-corrected estimator displays
reduced relative bias across all configurations. Regarding efficiency, the RMSE decreases with increasing $n$ for both estimators, with quite similar performances.

\begin{figure}[!ht]
\centering
\subfigure[Relative bias vs $n$ (Poisson)]{
  \includegraphics[width=0.47\textwidth]{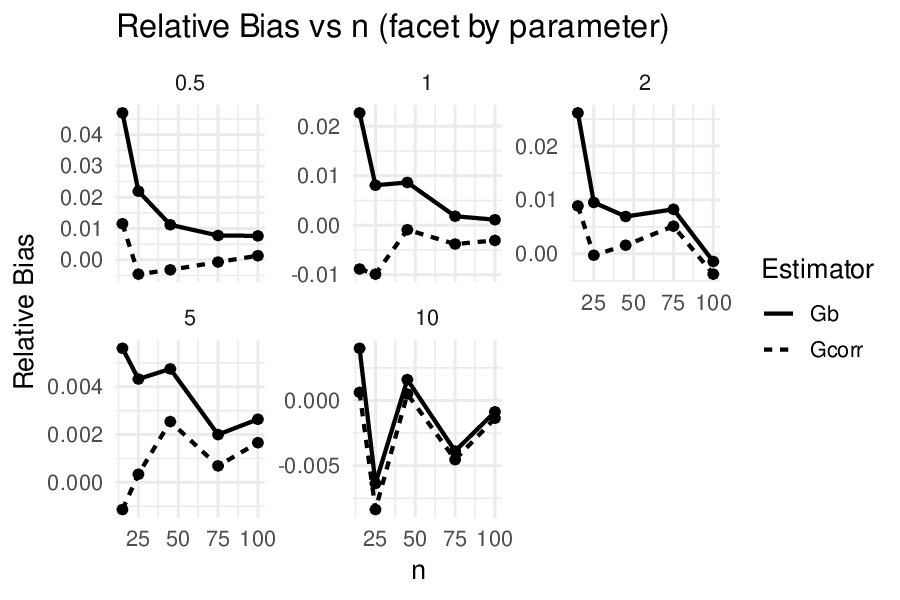}
}
\subfigure[RMSE vs $n$ (Poisson)]{
  \includegraphics[width=0.47\textwidth]{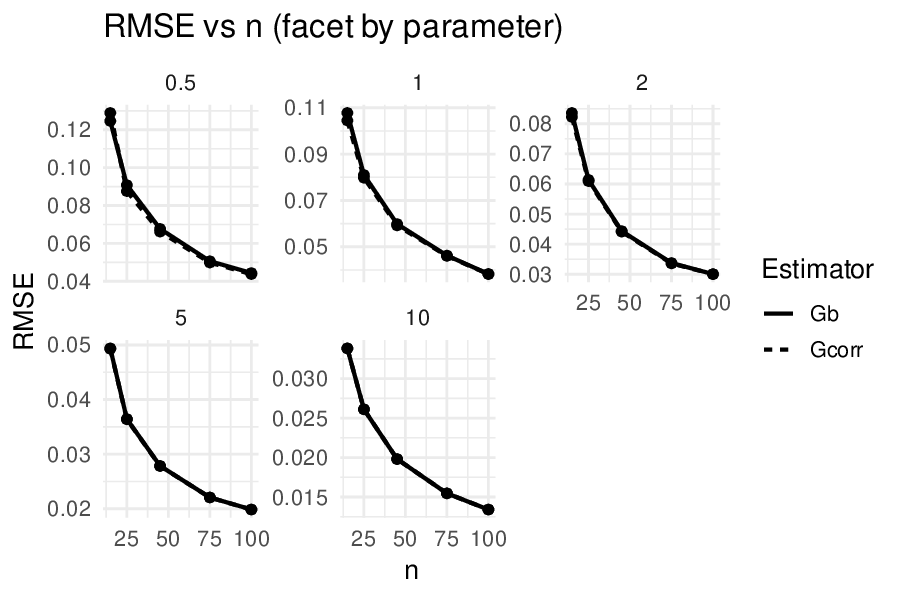}
}\\
\subfigure[Relative bias vs $\lambda$ (Poisson)]{
  \includegraphics[width=0.47\textwidth]{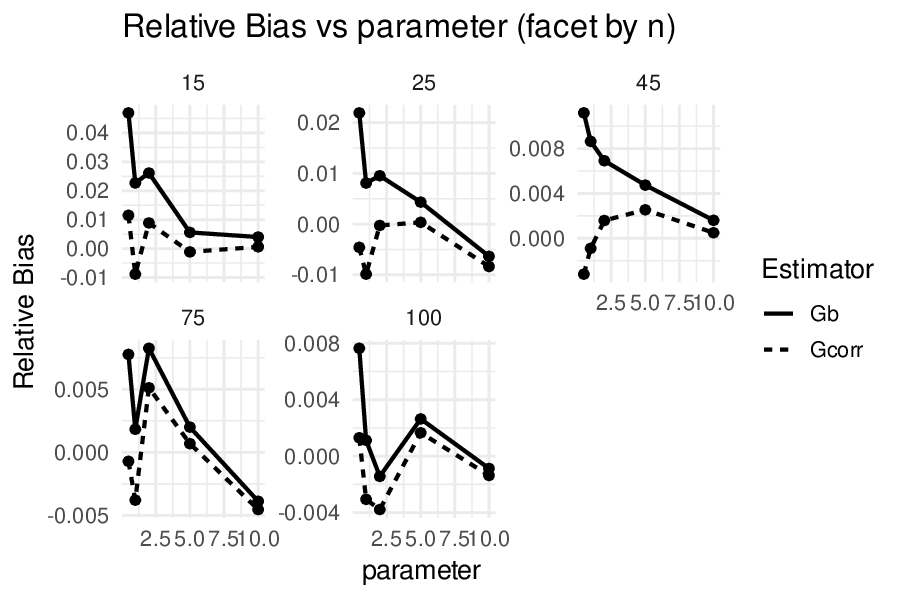}
}
\subfigure[RMSE vs $\lambda$ (Poisson)]{
  \includegraphics[width=0.47\textwidth]{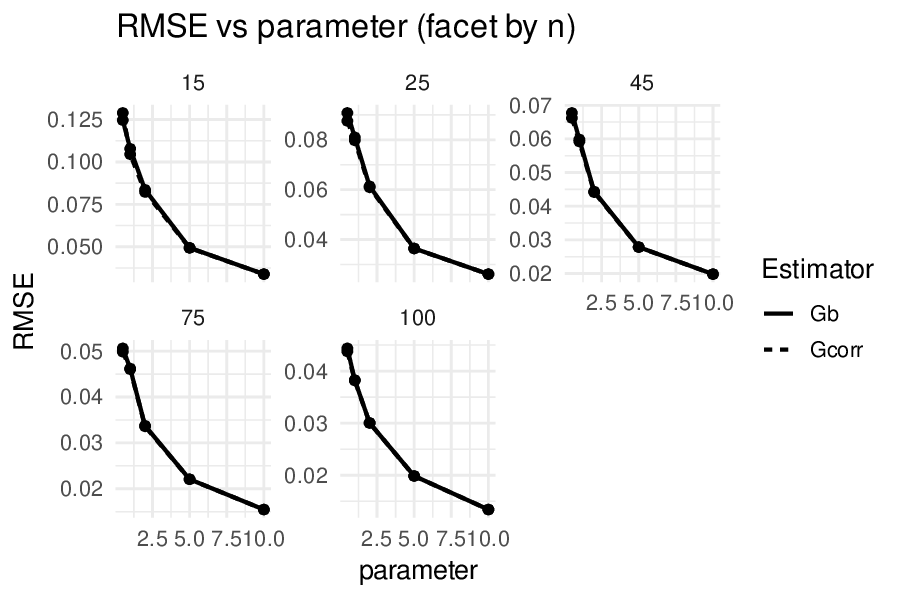}
}
\caption{Monte Carlo results for the Poisson model: relative bias and RMSE for the uncorrected estimator (Gb, Eq.~\eqref{def-G-estimator}) and the bias-corrected version (Gcorr).}
\label{fig:mc_poisson}
\end{figure}

Figure~\ref{fig:mc_geometric} presents the corresponding results for the geometric model. Panels (a) and (b)
show relative bias and RMSE as functions of $n$, whereas panels (c) and (d) depict these measures
as functions of the success probability $p$. As in the Poisson case, the bias-corrected estimator displays
reduced relative bias across all configurations, with similar RMSE performances.

\begin{figure}[!ht]
\centering
\subfigure[Relative bias vs $n$ (Geometric)]{
  \includegraphics[width=0.47\textwidth]{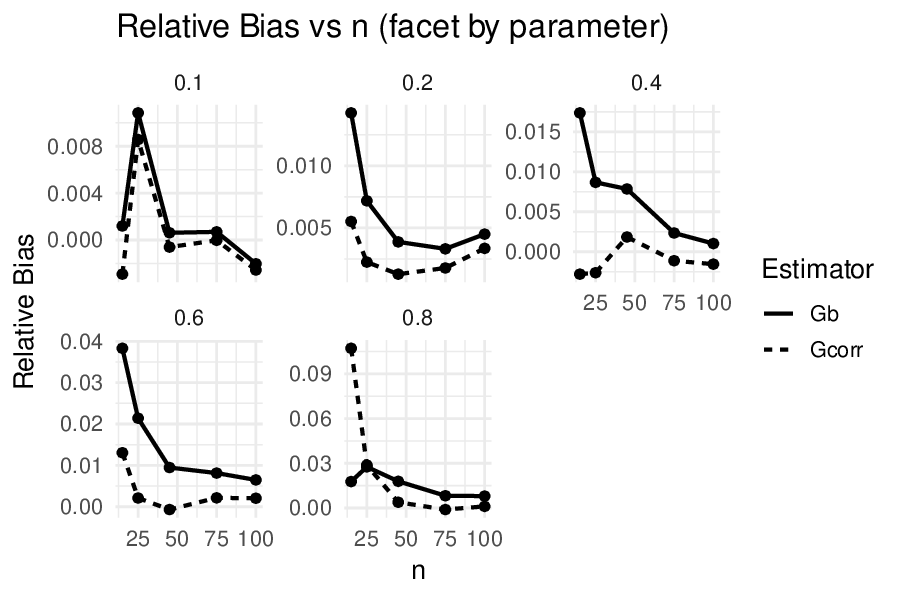}
}
\subfigure[RMSE vs $n$ (Geometric)]{
  \includegraphics[width=0.47\textwidth]{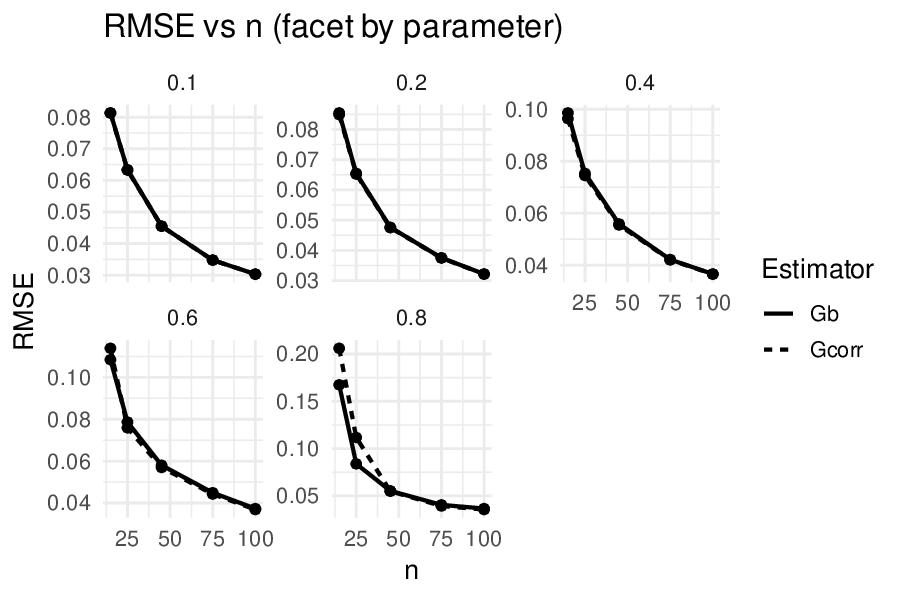}
}\\
\subfigure[Relative bias vs $p$ (Geometric)]{
  \includegraphics[width=0.47\textwidth]{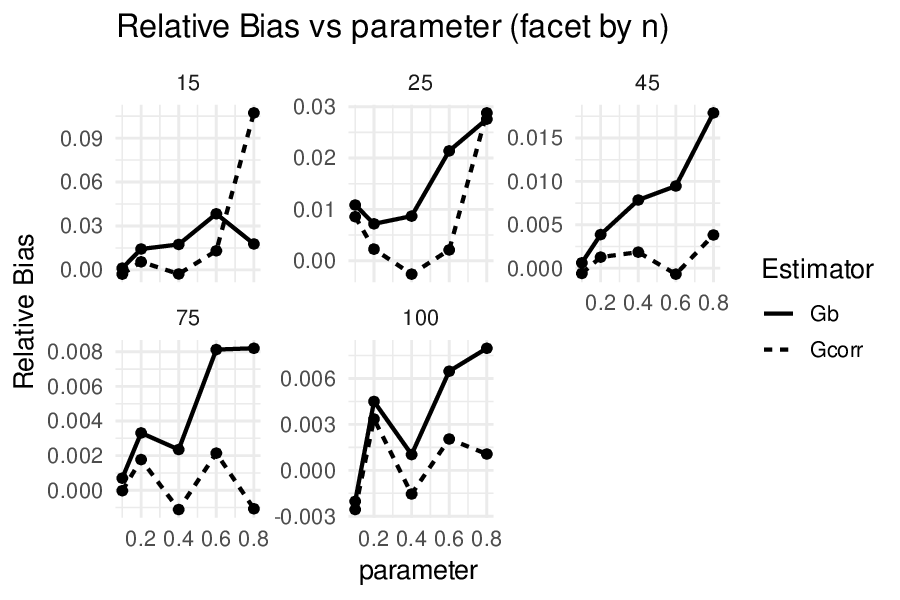}
}
\subfigure[RMSE vs $p$ (Geometric)]{
  \includegraphics[width=0.47\textwidth]{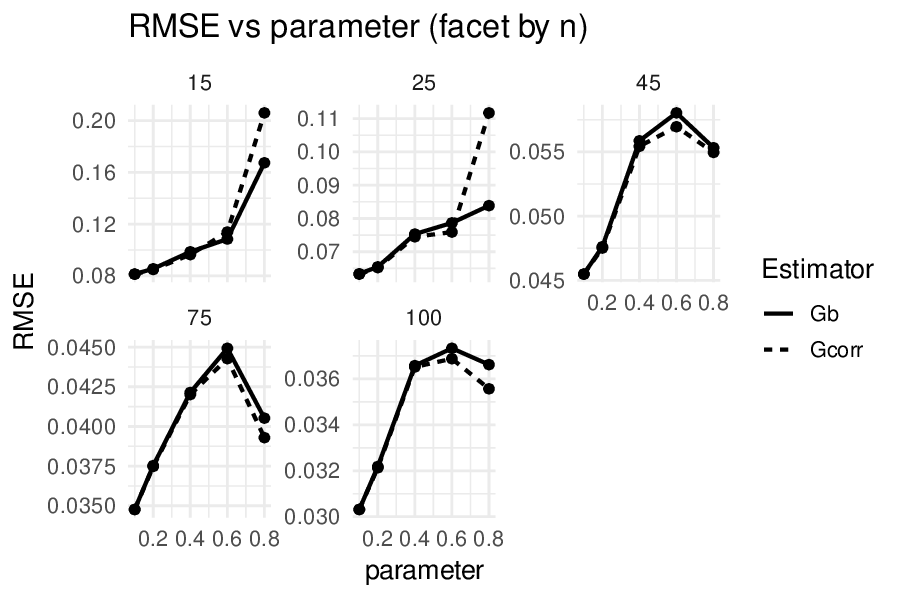}
}
\caption{Monte Carlo results for the geometric model: relative bias and RMSE for the uncorrected estimator (Gb, Eq.~\eqref{def-G-estimator}) and the bias-corrected version (Gcorr).}
\label{fig:mc_geometric}
\end{figure}

\section{Concluding remarks}\label{concluding_remarks}

In this paper, we derived a general representation for the expectation of the Gini coefficient estimator based on the Laplace transform of the underlying distribution and the Gini coefficient of its exponentially tilted version. This formulation yields a new characterization of the gamma family within the class of nonnegative scale families via a stability property under exponential tilting. As direct applications, we showed that the Gini estimator is biased for Poisson and geometric populations, while remaining unbiased for gamma models. By exploiting the derived bias expressions, we proposed plug-in bias-corrected estimators and assessed their finite-sample performance through Monte Carlo simulations, which confirmed substantial improvements over the uncorrected estimator.

\section*{Acknowledgments}

The research was supported in part by CNPq and CAPES grants from the Brazilian government. Helton Saulo gratefully acknowledges financial support from: (ii) the University of Brasília; and (ii) the Brazilian National Council for Scientific and Technological Development (CNPq), through Grant 304716/2023-5.

\section*{Declaration}

There are no conflicts of interest to disclose.




\begin{thebibliography}{}

	
	\bibitem[Aaberge, 2000]{Aaberge2000}
	Aaberge, R. (2000).
	\newblock Characterizations of lorenz curves and income distributions.
	\newblock {\em Social Choice and Welfare}, 17:639--653.
	\newblock \url{https://link.springer.com/article/10.1007/s003550000046}.
	
		\bibitem[Abramowitz and Stegun, 1972]{Abramowitz1972}
	Abramowitz, M. and Stegun, I. A. (1972).
	{\it Handbook of Mathematical Functions with Formulas, Graphs, and Mathematical Tables}.
	Dover, New York.
	
	\bibitem[Baydil et~al., 2025]{Baydil2025}
	Baydil, B., de~la Peña, V.~H., Zou, H., and Yao, H. (2025).
	\newblock Unbiased estimation of the gini coefficient.
	\newblock {\em Statistics and Probability Letters}, 222:110376.
	\newblock \url{https://doi.org/10.1016/j.spl.2025.110376}.
	
	\bibitem[Butler, 2007]{Butler2007}
	Butler, R., editor (2007).
	\newblock {\em Saddlepoint Approximations with Applications}.
	\newblock Cambridge University Press, Cambridge.
	
	\bibitem[De la Peña et al., 2025]{Pena2025} 
	De la Pe\~na, V. H., Yao, H., and Zou, H. (2025).
	A scalable formula for the moments of a family of self-normalized statistics.
	Available at \url{https://arxiv.org/pdf/2509.14428}.
	

	\bibitem[Deltas, 2003]{Deltas2003}
	Deltas, G. (2003).
	\newblock The small-sample bias of the gini coefficient: results and
	implications for empirical research.
	\newblock {\em Review of Economics and Statistics}, 85:226--234.
	
	\bibitem[Donaldson and Weymark, 1980]{Donaldson1980}
	Donaldson, D. and Weymark, J. (1980).
	\newblock Single parameter generalization of the gini indices of inequality.
	\newblock {\em Journal of Economic Theory}, 22:67--86.
	
	\bibitem[Gini, 1936]{Gini1936}
	Gini, C. (1936).
	\newblock On the measure of concentration with special reference to income and
	statistics.
	\newblock {\em Colorado College Publication, General Series}, (208):73--79.
	
	\bibitem[Kakwani, 1980]{Kakwani1980}
	Kakwani, N. (1980).
	\newblock On a class poverty measures.
	\newblock {\em Econometrica}, 48:437--446.
	\newblock \url{https://www.jstor.org/stable/1911106?origin=crossref}.
	
	\bibitem[Vila et~al., 2024]{VILA2024110032}
	Vila, R., Balakrishnan, N., and Saulo, H. (2024).
	\newblock An upper bound and a characterization for gini's mean difference
	based on correlated random variables.
	\newblock {\em Statistics and Probability Letters}, 207:110032.
	
	\bibitem[Vila and Saulo, 2025a]{VilaSaulomixture2025}
	Vila, R. and Saulo, H. (2025a).
	\newblock Bias in {G}ini coefficient estimation for gamma mixture populations.
	\newblock {\em Statistical Papers}, 66:1--18.
	\newblock \url{https://doi.org/10.1007/s00362-025-01768-w}.
	
	\bibitem[Vila and Saulo, 2025b]{VilaSaulomth2025}
	Vila, R. and Saulo, H. (2025b).
	\newblock The mth gini index estimator: Unbiasedness for gamma populations.
	\newblock Accepted for publication in The Journal of Economic Inequality.
	Available at \url{https://arxiv.org/abs/2504.19381}.
	
	\bibitem[Yitzhaki, 1983]{Yitzhaki1983}
	Yitzhaki, S. (1983).
	\newblock On a extension of gini inequality index.
	\newblock {\em International Economic Review}, 24:617--628.
	
	
	
\end{thebibliography}

\end{document}